\documentclass[11pt]{article}

\usepackage[T1]{fontenc}
\usepackage[utf8]{inputenc}
\usepackage[margin=1in]{geometry}
\usepackage{authblk}
\usepackage{amsthm}
\usepackage{amsmath,amssymb}
\usepackage{paralist}
\usepackage{xcolor}
\usepackage{comment}
\usepackage{stmaryrd}
\usepackage{txfonts}
\usepackage[hidelinks]{hyperref}

\DeclareMathOperator{\EN}{\mathsf{EN}}

\DeclareMathOperator{\supp}{\mathsf{supp}}

\mathchardef\mhyphen="2D

\newtheorem{proposition}{Proposition}

\newtheorem{corollary}{Corollary}
\newtheorem{construction}{Construction}
\newtheorem{example}{Example}
\newtheorem{remark}{Remark}

\newtheorem{definition}{Definition}

\newcommand*{\multisetminus}{\setminus\mskip-10mu\setminus}

\title{A Unified Framework for Reaction Systems Based on Interval Structures}

\author[1]{Paolo Bottoni\thanks{Corresponding author: \href{mailto:bottoni@di.uniroma1.it}{bottoni@di.uniroma1.it}}}
\author[1]{Anna Labella\thanks{\href{mailto:labella@di.uniroma1.it}{labella@di.uniroma1.it}}}
\author[2]{Ion Petre\thanks{\href{mailto:ion.petre@utu.fi}{ion.petre@utu.fi}}}

\affil[1]{Department of Computer Science, Sapienza University of Rome, Viale Regina Elena 295, 00161 Rome, Italy}
\affil[2]{Department of Mathematics and Statistics, University of Turku, Vesilinnantie 5, 20014 Turku, Finland}

\date{}

\begin{document}

\maketitle

\begin{abstract}
Reaction systems have evolved into a rich family of computational models differing in their treatment of multiplicities, resource management, concurrency, and state evolution. We introduce a unified semantic framework based on interval structures and interval-based transformation systems. The framework decomposes operational semantics into independent resource, production, update, and execution strategies, providing a common basis for describing, comparing, and constructing reaction-system variants. We show that classical reaction systems, restricted reaction systems, multiset reaction systems, reaction systems with concentration, and resource-preserving multiset reaction systems are all recovered as instantiations of the framework. Quantitative reaction systems are accommodated through an additional preprocessing stage. We further demonstrate that the framework naturally extends beyond reaction systems to other computational models, including Petri nets. The proposed framework provides a common semantic foundation for existing models and a flexible basis for developing and analysing new computational formalisms.
\end{abstract}

\section{Introduction}\label{sec:intro}
Reaction systems were introduced in \cite{ehrenfeucht_reaction_2007} as a qualitative computational model inspired by the functioning of biochemical reaction networks. In contrast to traditional models based on resource consumption, reaction systems emphasize the mechanisms of facilitation and inhibition, capturing the intuition that the occurrence of a reaction depends on the presence of certain entities and the absence of others. Since their introduction, reaction systems have developed into an active research area within natural computing, giving rise to a substantial body of theoretical results and numerous extensions \cite{ehrenfeucht_reaction_2016}.

Several variants of reaction systems have been proposed in order to address limitations of the original qualitative framework. Classical reaction systems operate on sets and therefore only distinguish between presence and absence of entities. To represent quantitative phenomena, various multiset-based and quantitative extensions have been introduced, allowing explicit multiplicities, resource-sensitive behaviour, accumulation of products, saturation effects, and persistence of resources across computational steps \cite{mitrana_quantitative_2025, bottoni_multiset_2025, brodo_quantitative_2023, brodo_exploiting_2021}. 
Additional developments have explored alternative notions of concurrency \cite{koutny_asynchrony_2021}, enabling conditions \cite{ehrenfeucht_evolving_2017}, and execution strategies \cite{manzoni_simple_2014,salomaa_minimal_2013}. 
Hence, the field now contains a rich collection of related models that share common intuitions but differ substantially in their operational semantics.

Despite these developments, a general framework is missing which could support the description and comparison of these variants within a common formal setting. 
Indeed, existing models are typically introduced independently, using different mathematical representations and semantic assumptions. 
This makes it difficult to identify precisely which aspects of the semantics are responsible for the behavioural differences between models, and obscures the relationships between existing approaches.

This paper introduces a unified semantic framework for reaction-system variants. We use interval structures as a common representation for enabling conditions and production outcomes, allowing exact multiplicities, threshold conditions, saturation constraints, and nondeterministic production choices to be expressed uniformly. Building on this representation, we define interval-based transformation systems whose semantics is determined by independent choices of resource, production, update, and execution strategies. This decomposition makes the semantic assumptions underlying existing models explicit, provides a systematic basis for comparing them, and supports the definition of new variants by combining semantic policies.
%
%
%
In particular, we
show that classical reaction systems, restricted reaction systems, multiset reaction systems, resource-preserving multiset reaction systems, and reaction systems with concentration are recovered as particular instantiations of the framework. Quantitative reaction systems are accommodated through an additional preprocessing stage, while the same semantic framework naturally extends beyond reaction systems to 
%
%
other models of dynamic systems, namely Petri Nets and finite state automata.
%
%
%

The paper is organized as follows. 
Section \ref{sec:background} recalls basic notions on multisets. 
Section \ref{sec:S_intervals} introduces interval structures and studies their fundamental properties, in order to define interval transformations and interval transformation systems in Section \ref{S_transformations}. 
Section \ref{S_policies} develops the semantic framework through resource, production, and update policies. 
Section \ref{S_execution} discusses execution strategies and different notions of concurrency, based on which Section \ref{S_variants} revises several existing variants of reaction systems (as well as other dynamical systems) from the framework perspective.
Section \ref{sec:concls} concludes the paper and  points to future research.

\section{Background}\label{sec:background}
%
%
We denote the set of natural numbers by $\mathbb{N}$ and the set of positive natural numbers by $\mathbb{N}^+$. 
We also add an extra symbol $\top$ with the convention that $x<\top$ for all $x\in\mathbb{N}$.

The \textit{cardinality} of a finite set $X$ is denoted by $|X|$.

For a function $f:X\rightarrow{Y}$, we call $X$ the \textit{domain} of $f$, noted $dom(f)$; $Y$ the \textit{codomain} of $f$, noted $cod(f)$; and $\{y\in{Y}\mid{y}=f(x)\mbox{ for some }x\in{X}\}$, the \textit{range} of $f$, noted $rng(f)$.

\begin{definition}[Multiset]\label{def:multiset}
Let $S$ be a finite set. 
A \emph{multiset} over $S$ is a function $M:S\rightarrow\mathbb{N}$, where $M(x)$ denotes the \emph{multiplicity} of $x$ in $M$.
The set of multisets over $S$ is denoted by $\mathcal{M}(S)$. For $M^\prime,M^{\prime\prime}\in\mathcal{M}(S)$, we say that $M^\prime\sqsubseteq M^{\prime\prime}$ if $M^\prime(x)\le M^{\prime\prime}(x)$, for all $x\in{S}$.
\end{definition}

For a multiset $M\in\mathcal{M}(S)$ and an element $x\in{S}$, we say that $x$ \textit{occurs} in $M$, and write $x\in{M}$, whenever $M(x)>0$.
Also, we say that the set $\{x\in{S}\mid{M}(x)>0\}$ is the \emph{support} of $M$, and write $supp(M)$.
A set $X\subseteq S$ can be viewed as a multiset by taking
\[
X(x)=
\begin{cases}
1, & x\in X,\\
0, & x\notin X.
\end{cases}
\]

In the rest of the paper, we rely on the standard operations on sets and multisets of union, intersection, and difference, respectively denoted for the two cases by $(\cup,\cap,\setminus)$ and $(\sqcup,\sqcap,\multisetminus)$.
We use the standard notation $\emptyset$ for the multiset with $\emptyset(x)=0$ for $x\in{S}$.

\section{Interval structures}\label{sec:S_intervals}
The models considered in this paper require a representation capable of expressing exact multiplicities, lower and upper bounds, threshold conditions, and families of admissible multisets. To this end, we introduce interval structures, which associate with each entity a range of admissible multiplicities. As will become apparent later, the same formalism can be used both to describe enabling conditions and to specify possible outcomes of transformations.

\begin{definition}[Interval structure]\label{def:int_struct}
An \emph{interval structure} over $S$ is a function
\[
W:S\rightarrow \mathbb{N}\times(\mathbb{N}^{+}\cup\{\top\}) 
\]
such that, for every $x\in S$, if $W(x)=(\ell_W(x),u_W(x))$, then $\ell_W(x)<u_W(x)$.
%
\end{definition}

For each \textit{entity} $x\in{S}$, an interval structure specifies its admissible multiplicity range. 

We now formalize the relationship between interval structures and multisets by defining when a multiset satisfies an interval structure. This will allow us to interpret every interval structure as a family of multisets.

\begin{definition}[Satisfaction]
Let $M\in\mathcal{M}(S)$ and let $W$ be an interval structure over $S$.
We say that $M$ \emph{satisfies} $W$, written $M\models W$, if $\ell_W(x)\le M(x)<u_W(x)$ for every $x\in S$.
\end{definition}

For an interval structure $W$, we denote by $\mathbf{M}(W)$ the family of multisets satisfying $W$:
\[
\mathbf{M}(W)
=
\{M\in\mathcal{M}(S)\mid M\models W\}.
\]

Interval structures provide a uniform representation for a variety of multiplicity-based objects. The following examples show that both sets and multisets arise as special cases. This observation will later allow both classical reaction systems and multiset reaction systems to be described within a common formalism.

\begin{example}[Sets]
Let $X\subseteq S$. Define the interval structure $W_X$ by
\[
W_X(x)=
\begin{cases}
(1,2), & \text{if }x\in X,\\
(0,1), & \text{if }x\notin X.
\end{cases}
\]

Then $\mathbf{M}(W_X)$ contains exactly one multiset, corresponding to $X$.
\end{example}

\begin{example}[Multisets]
Let $M\in\mathcal{M}(S)$. Define the interval structure $W_M$ by
\[
W_M(x)=(M(x),M(x)+1),
\qquad x\in S.
\]

Then $\mathbf{M}(W_M)=\{M\}$.
\end{example}

\section{Interval transformations}\label{S_transformations}
Interval structures provide a uniform way of describing families of multisets. We now use them to specify both the enabling conditions and the possible outcomes of transformations.

\begin{definition}[Interval-based transformation]\label{def:intBasedT}
An \emph{interval-based transformation} (short, \emph{transformation}) over $S$ is a pair $\tau=(\varepsilon_\tau,\pi_\tau)$, where $\varepsilon_\tau$ and $\pi_\tau$ are interval structures over $S$.
Then, 
%
%
$\varepsilon_\tau$ is called the \emph{enabling structure} of $\tau$, while $\pi_\tau$ is called its \emph{production structure}.
\end{definition}

The enabling structure specifies the family of states in which the transformation may be executed. The production structure specifies the family of multisets that may be produced when the transformation is applied.

Individual transformations describe isolated state changes. Computational models, however, are typically specified through collections of transformations that interact and may execute concurrently. We therefore introduce interval transformation systems as finite sets of interval transformations defined over a common background set.

\begin{definition}[Interval-based transformation system]\label{def:intBasedTS}
An \emph{interval-based transformation system} over $S$ is a pair $\mathcal{A}=(S,A)$, where $A$ is a finite set of interval transformations over $S$. 
The \emph{states} of an interval-based transformation system over $S$ are multisets over $S$.
\end{definition}

The enabling structure of a transformation specifies the family of states in which the transformation may occur. We now formalize this notion by defining when a state satisfies the requirements imposed by an enabling structure: a transformation is enabled whenever the current state belongs to the family of multisets represented by its enabling structure.

\begin{definition}[Enabling]\label{def:enabling}
Let $M\in\mathcal{M}(S)$ and let $\tau=(\varepsilon_\tau,\pi_\tau)$ be an interval-based transformation.
We say that $\tau$ is \emph{enabled} in $M$ if $M\models\varepsilon_\tau$.
Given a set of interval-based transformations $A$, the set of enabled transformations from $A$ in $M$ is denoted by $\EN(A,M)$.
\end{definition}

Reaction systems and their variants typically allow several reactions to occur simultaneously. 
To capture this form of concurrency, we group transformations into \textit{blocks}, which are interpreted as executing together. 
The multiplicity $B(\tau)$ in a block $B$ specifies how many times the transformation $\tau$ participates in the execution of the block.
However, not every collection of transformations can meaningfully execute simultaneously. Different transformations may impose incompatible enabling conditions on the same entities, making it impossible for their requirements to be satisfied in a common state. 
We therefore evaluate each block with respect to a notion of \textit{viability}, expressing that the enabling structures of its transformations are mutually compatible.

\begin{definition}[Block]\label{def:block}
Let $\mathcal{A}=(S,A)$ be an interval transformation system.
A \emph{block} in $\mathcal{A}$ is any multiset in $\mathcal{M}(A)$. 
The \emph{support} of $B$ is $\supp(B)=\{\tau\in A\mid B(\tau)>0\}$.

A block $B$ is \emph{viable} if $\bigcap_{\tau\in \supp(B)}\mathbf M(\varepsilon_\tau)\neq\emptyset$. 
Equivalently,
\[
\max_{\tau\in \supp(B)}
\ell_{\varepsilon_\tau}(x)
<
\min_{\tau\in \supp(B)}
u_{\varepsilon_\tau}(x), \forall x\in S.
\]
We denote the set of viable blocks in $\mathcal{A}$ by $\mathcal{B}(A)$.
\end{definition}

\begin{example}\label{ex:inconsistent}
Consider two transformations with enabling structures $\varepsilon_1(x)=(2,3),\varepsilon_2(x)=(3,5)$, respectively.
Obviously, no multiset satisfies both requirements simultaneously. Hence
$\mathbf{M}(\varepsilon_1)\cap\mathbf{M}(\varepsilon_2)=\emptyset$, and so, any block whose support contains both transformations is not viable.
On the other hand, with $\varepsilon_1(x)=(2,4),\varepsilon_2(x)=(3,5)$, we have $\mathbf{M}(\varepsilon_1)\cap\mathbf{M}(\varepsilon_2)=\{3\}$, so that a block constituted of the two transformations would be viable.
\end{example}

\begin{remark}\label{R1}
For a block $B\in\mathcal{B}(A)$, the enabling structures of its transformations induce an aggregate enabling structure $\varepsilon_B$, defined by
\[
\ell_{\varepsilon_B}(x)
=
\max_{\tau\in \supp(B)}
\ell_{\varepsilon_\tau}(x),
\qquad 
u_{\varepsilon_B}(x)
=
\min_{\tau\in \supp(B)}
u_{\varepsilon_\tau}(x),
\qquad
\forall x\in S.
\]
A state $M$ satisfies $\varepsilon_B$ if and only if it satisfies the enabling structure of every transformation in $\supp(B)$. 
Equivalently,
\[
\mathbf M(\varepsilon_B)
=
\bigcap_{\tau\in \supp(B)}
\mathbf M(\varepsilon_\tau).
\]
Thus, a block is viable precisely when $\mathbf M(\varepsilon_B)\neq\varnothing$.
The aggregate enabling structure $\varepsilon_B$ plays a role analogous to the combined reaction introduced in \cite{genova_enabling_2025}. In particular, $\varepsilon_B$ provides a canonical representation of the collective enabling requirements of a block and may serve as a basis for studying notions of enabling equivalence between blocks.
\end{remark}

Unlike classical reaction systems, where a set of reactions admits only finitely many subsets, blocks in the present framework are multisets, and transformations in them may therefore occur with arbitrary multiplicities. 
As a consequence, the collection of viable blocks need not be finite and is indeed infinite for any nonempty system, as any block with $supp(B)=\{\tau\}$ for some transformation $\tau$ can actually be formed by an arbitrary number of copies of $\tau$. 

The transformations in a block are interpreted as occurring simultaneously. Since production structures may represent multiple possible multisets, the application of a block may be non-deterministic. The semantics of block application will be defined in the next section.

\section{Block semantics}\label{S_policies}
Since a production structure does not necessarily specify a single multiset, but represents a family of possible outcomes, the application of a block may be non-deterministic.

The semantics of a block application consists of three stages. First, the transformations within the block contribute resource requirements, which are combined according to a chosen resource policy. Second, each transformation selects an outcome from its production structure and the resulting products are aggregated according to a chosen production policy. Finally, the aggregated products are used to update the current state according to a chosen update policy.

\subsection{Resource policies}\label{sub:resource_policies}
The enabling structure of a transformation determines the resources required for its execution. 
In particular, we define demand multisets, based on the lower bound of an enabling structure, to capture quantitative resource requirements. 
Upper bounds in enabling structures represent constraints rather than consumable resources and therefore do not contribute to demand.

Let $\tau=(\varepsilon_\tau,\pi_\tau)$ be an interval transformation.
The \emph{demand multiset} of $\tau$ is the multiset $D_\tau$ defined by $D_\tau(x)=\ell_{\varepsilon_\tau}(x)$ for every $x\in S$.

Different notions of resource sensitivity are obtained by combining the demand multisets of the transformations in a block in different ways.
%
A \emph{resource policy} associates with every block $B$ a multiset $\Delta(B)$, called the \emph{aggregated demand} of $B$.

Two natural resource policies are the following:
\begin{itemize}
\item \emph{additive demand policy}:   
\[\Delta^{+}(B)(x) = \bigsqcup_{\tau\in \supp(B)} B(\tau)D_\tau(x), \forall x\in S;\]
\item \emph{maximum demand policy}: 
\[\Delta^{\max}(B)(x) = \max_{\tau\in \supp(B)} D_\tau(x), \forall x\in S.\]
\end{itemize}

By convention, for $B$ with empty support the maximum function above returns 0. 

\begin{proposition}\label{prop:demand-comparison}
Let $B$ be a block. 
Then, for $x\in S$, we have $\Delta^{\max}(B)(x)\leq\Delta^{+}(B)(x)$.
\end{proposition}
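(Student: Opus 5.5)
The proof is a pointwise comparison for a fixed entity $x\in S$. I will split according to whether $\supp(B)$ is empty. If $\supp(B)=\emptyset$, the convention stated after the two policies gives $\Delta^{\max}(B)(x)=0$. Likewise $\Delta^{+}(B)(x)$ is an aggregate over an empty family, hence the empty multiset $\emptyset$ evaluated at $x$, which is $0$. So the inequality reads $0\le 0$.

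If $\supp(B)\neq\emptyset$, the support is a finite subset of the finite set $A$, so I will fix $\tau^{*}\in\supp(B)$ attaining the maximum, that is, $D_{\tau^{*}}(x)=\Delta^{\max}(B)(x)$. The key observation is that $\tau^{*}\in\supp(B)$ means $B(\tau^{*})\ge 1$ by Definition~\ref{def:block}. Since $D_{\tau^{*}}(x)=\ell_{\varepsilon_{\tau^{*}}}(x)\in\mathbb{N}$ is nonnegative, this gives
\[
\Delta^{\max}(B)(x)=D_{\tau^{*}}(x)\le B(\tau^{*})\,D_{\tau^{*}}(x).
\]

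It then remains to show that the aggregate defining $\Delta^{+}(B)(x)$ dominates each of its constituent terms $B(\tau)D_\tau(x)$, in particular the one for $\tau^{*}$. This is the only point needing care, because the notation $\bigsqcup$ admits two readings. If $\sqcup$ is read as additive multiset union (pointwise sum), then $\Delta^{+}(B)(x)=\sum_{\tau\in\supp(B)}B(\tau)D_\tau(x)$ is a sum of natural numbers, so it is at least any single summand. If $\sqcup$ is read as pointwise maximum, then $\Delta^{+}(B)(x)=\max_{\tau}B(\tau)D_\tau(x)$, which is again at least the $\tau^{*}$ term. I will state both readings explicitly so that the inequality holds regardless of convention.

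Chaining the two inequalities gives $\Delta^{\max}(B)(x)\le\Delta^{+}(B)(x)$ for every $x\in S$. I expect no real obstacle; the only subtle step is the notational point just discussed, together with the use of $B(\tau^{*})\ge1$. That use is why the maximum in $\Delta^{\max}$ must range over $\supp(B)$ rather than over all of $A$.
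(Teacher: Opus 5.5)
Your proof is correct and follows the same idea as the paper's one-line argument: over $\supp(B)$, the maximum of the nonnegative values $D_\tau(x)$ is dominated by the aggregate defining $\Delta^{+}(B)(x)$. You also spell out the step $D_{\tau^*}(x)\le B(\tau^*)D_{\tau^*}(x)$ via $B(\tau^*)\ge 1$, which the paper leaves implicit, since its proof compares the maximum with the plain sum of the $D_\tau(x)$ rather than with the weighted sum that actually defines $\Delta^{+}(B)(x)$.
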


\begin{proof}
For $x\in S$, the maximum of the values $D_\tau(x)$, with $\tau\in\supp(B)$, is bounded above by their sum.
\end{proof}

Resource policies determine how transformations interact when they require the same resources. 
Under the additive demand policy, resource requirements accumulate across
transformations. 
Consequently, transformations compete for available resources, and the execution of one transformation may prevent the simultaneous execution of another. 
In contrast, the maximum demand policy allows resources to be shared among transformations.
Multiple transformations may simultaneously rely on the same resources, and the demand of a block is determined by the largest individual requirement for each entity.

Viability only guarantees that the enabling structures of the transformations in a block are mutually compatible, but a viable block may still not be executable under a given resource policy. 
For example, additive demand may require more resources than are allowed by the upper bounds 
%
%
in the aggregate enabling structure. 
This motivates an intermediate notion, capturing whether a viable block can satisfy the requirements imposed by a resource policy.

\begin{definition}[$\Delta$-feasible block]\label{def:feasible}
Let $\Delta$ be a resource policy and let $B$ be a viable block. We say that $B$ is $\Delta$-feasible if there exists a state M such that

\[
M\models\varepsilon_B
\quad \text{and}\quad
\Delta(B)(x)\le M(x),
\qquad
\forall x\in S.
\]
\end{definition}


The notion of feasibility used above is induced by a resource policy: a viable block is feasible when its aggregate resource demand can be satisfied by some state compatible with its aggregate enabling structure.
Other variants may impose additional structural admissibility constraints on blocks. 
Such constraints are independent of resource availability and are best treated as separate feasibility requirements. 
One example is the mutual-exclusion constraint used in restricted reaction systems, discussed in Section \ref{S_restricted}.

Resource policies induce feasibility through aggregated demand, as in Definition \ref{def:feasible}.
Other feasibility policies may instead impose structural constraints on blocks.


Multiple occurrences of the same transformation may be limited by the upper bounds imposed by the aggregate enabling structure. 
Under additive demand, each occurrence contributes its demand, while the enabling constraints are determined by the support of the block. 
Hence, repeated occurrences of a transformation can yield infinitely many $\Delta^+$-feasible blocks only when the resources required by that transformation are not bounded above.

\begin{proposition}\label{prop:infiniteConsistent}
    Given an interval transformation system $\mathcal{A}=(S,A)$, the set of its viable blocks that are $\Delta^+$-feasible is infinite if and only if there exists $\tau\in{A}$ such that, for all $x\in{S}$, if $\ell_{\varepsilon_\tau}(x) > 0$, then $u_{\varepsilon_\tau}(x) =\top$.
\end{proposition}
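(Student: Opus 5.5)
The plan is to prove the two implications separately, using only the single-transformation structure of the blocks involved and the bound $M(x)<u_{\varepsilon_B}(x)\le u_{\varepsilon_\tau}(x)$ obtained from Remark~\ref{R1}.

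For the \emph{if} direction, fix $\tau$ satisfying the condition. For each $k\in\mathbb{N}^+$ let $B_k$ be the block with $B_k(\tau)=k$ and $B_k(\sigma)=0$ for $\sigma\neq\tau$. These blocks are pairwise distinct. Each is viable, since $\supp(B_k)=\{\tau\}$ and $\ell_{\varepsilon_\tau}(x)<u_{\varepsilon_\tau}(x)$ holds by Definition~\ref{def:int_struct}, so $\varepsilon_{B_k}=\varepsilon_\tau$. To show $\Delta^+$-feasibility, exhibit the state $M_k$ given by
\[
M_k(x)=k\,\ell_{\varepsilon_\tau}(x),\qquad x\in S.
\]
Then $\Delta^+(B_k)(x)=k\,D_\tau(x)=M_k(x)$. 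Also $M_k(x)\ge\ell_{\varepsilon_\tau}(x)$ because $k\ge1$. For the upper bound, if $\ell_{\varepsilon_\tau}(x)>0$ then $u_{\varepsilon_\tau}(x)=\top$, so $M_k(x)<u_{\varepsilon_\tau}(x)$ holds trivially. If $\ell_{\varepsilon_\tau}(x)=0$ then $M_k(x)=0<u_{\varepsilon_\tau}(x)$, since upper bounds lie in $\mathbb{N}^+\cup\{\top\}$. Hence infinitely many $\Delta^+$-feasible blocks exist.

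For the \emph{only if} direction, argue by contraposition. Assume that every $\tau\in A$ has some entity $x_\tau$ with $\ell_{\varepsilon_\tau}(x_\tau)>0$ and $u_{\varepsilon_\tau}(x_\tau)=n_\tau\in\mathbb{N}^+$. Let $B$ be $\Delta^+$-feasible, witnessed by $M$, and let $\tau\in\supp(B)$. The demands of the other transformations are nonnegative, so
\[
B(\tau)\,\ell_{\varepsilon_\tau}(x_\tau)\le\Delta^+(B)(x_\tau)\le M(x_\tau)<u_{\varepsilon_B}(x_\tau)\le n_\tau .
\]
Hence $B(\tau)<n_\tau$, a bound independent of $B$. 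Since $A$ is finite, every feasible block lies in the finite set $\prod_{\tau\in A}\{0,\dots,n_\tau-1\}$, so there are only finitely many $\Delta^+$-feasible blocks.

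The main point needing care is the chain of inequalities in the converse. It uses that $\Delta^+$ dominates each summand $B(\tau)D_\tau$, reading the $\bigsqcup$ in the definition of $\Delta^+$ as a multiset sum, consistent with Proposition~\ref{prop:demand-comparison}. It also uses that $u_{\varepsilon_B}$ is the minimum over the support, so the finite bound of any single $\tau$ in the support constrains $M$. The rest is routine.
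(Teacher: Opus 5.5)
Your proof is correct and takes essentially the same route as the paper. The forward direction uses the same single-transformation blocks with witness $M(x)=k\,\ell_{\varepsilon_\tau}(x)$. For the converse you use the same key bound $B(\tau)\,\ell_{\varepsilon_\tau}(x)\le M(x)<u_{\varepsilon_\tau}(x)$, phrased contrapositively as a uniform finite box $\prod_{\tau\in A}\{0,\dots,n_\tau-1\}$ rather than the paper's choice of a transformation with unbounded multiplicity.
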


\begin{proof}
Assume first that such a transformation $\tau$ exists. For every $n\geq 1$, define the block $B_n$ by $B_n(\tau)=n$ and $B_n(\sigma)=0$  for every $\sigma\neq\tau$. 
Under additive demand, we have $\Delta^+(B_n)(x)=n\ell_{\varepsilon_\tau}(x)$.
Then:
\begin{itemize}
    \item If $\ell_{\varepsilon_\tau}(x)=0$,we choose $M(x)=0$.
    \item If $\ell_{\varepsilon_\tau}(x)>0$, then we have $u_{\varepsilon_\tau}(x)=\top$ and we can choose $M(x)=n\ell_{\varepsilon_\tau}(x)$.
\end{itemize}

With these choices, for every $x\in S$, we have $\ell_{\varepsilon_\tau}(x)\le{M}(x)<u_{\varepsilon_\tau}(x)$ and $\Delta^+(B_n)(x)\le M(x)$.
Therefore $B_n$ is $\Delta^+$-feasible for all $n\geq 1$.

Conversely, suppose that there are infinitely many $\Delta^+$-feasible blocks. 
Since $A$ is finite, there exists a transformation $\tau\in A$ whose multiplicity is unbounded among these blocks. 
Let $x\in S$ be such that $\ell_{\varepsilon_\tau}(x)>0$. 
If $u_{\varepsilon_\tau}(x)\neq\top$, then $\Delta^+$-feasibility of $B$ requires the existence of a state $M$ such that
\[
B(\tau)\ell_{\varepsilon_\tau}(x)\le M(x)<u_{\varepsilon_\tau}(x).
\]
Since $u_{\varepsilon_\tau}(x)$ is finite, this is possible for only finitely many values of $B(\tau)$, contradicting unboundedness. 
Therefore $u_{\varepsilon_\tau}(x)=\top$.
\end{proof}

The maximum demand policy imposes no additional structural restrictions beyond viability, as shown in the following result.

\begin{proposition}
A block is viable if and only if it is
$\Delta^{\max}$-feasible.
\end{proposition}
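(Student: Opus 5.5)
The plan is to prove the two directions separately, with nearly all the work in the forward one. The converse is immediate from Definition~\ref{def:feasible}: a $\Delta^{\max}$-feasible block is assumed viable to begin with. Even without that built-in assumption, the state $M$ in the definition satisfies $M\models\varepsilon_B$. Hence $\mathbf M(\varepsilon_B)\neq\emptyset$, which by Remark~\ref{R1} is exactly viability.

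For the forward direction, let $B$ be viable. By Remark~\ref{R1}, $\ell_{\varepsilon_B}(x)<u_{\varepsilon_B}(x)$ for every $x\in S$. I will choose the explicit witness state
\[
M(x)=\ell_{\varepsilon_B}(x)=\max_{\tau\in\supp(B)}\ell_{\varepsilon_\tau}(x),\qquad x\in S.
\]
This $M$ satisfies $\ell_{\varepsilon_B}(x)\le M(x)<u_{\varepsilon_B}(x)$ for all $x$, so $M\models\varepsilon_B$. By the definition of demand multisets, $D_\tau(x)=\ell_{\varepsilon_\tau}(x)$. Therefore
\[
\Delta^{\max}(B)(x)=\max_{\tau\in\supp(B)}D_\tau(x)=\ell_{\varepsilon_B}(x)=M(x),
\]
so the demand inequality of Definition~\ref{def:feasible} holds with equality. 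The key observation is that under the maximum policy the aggregated demand coincides pointwise with the lower bound of the aggregate enabling structure. Any state compatible with $\varepsilon_B$ therefore automatically meets the demand.

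The only point needing care, and the closest thing to an obstacle, is the degenerate block with empty support. There the convention gives $\Delta^{\max}(B)=\emptyset$. The aggregate structure $\varepsilon_B$ must also be read with the conventions $\max\emptyset=0$ and $\min\emptyset=\top$, so every state, for instance $M=\emptyset$, satisfies it. The empty block is then both viable and $\Delta^{\max}$-feasible, and the equivalence still holds. I will state these conventions explicitly so the argument above applies uniformly.
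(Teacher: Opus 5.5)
Your proof is correct and follows the paper's argument. The key identity $\Delta^{\max}(B)=\ell_{\varepsilon_B}$ makes any state satisfying $\varepsilon_B$ a feasibility witness, and the converse is read off directly from the definition; your explicit witness $M=\ell_{\varepsilon_B}$ and your handling of the empty-support block are harmless refinements of that same route.
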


\begin{proof}
If $B$ is viable, then there exists $M\models\varepsilon_B$.
For every $x\in S$,

\[
\Delta^{\max}(B)(x)
=
\max_{\tau\in\supp(B)}
\ell_{\varepsilon_\tau}(x)
=
\ell_{\varepsilon_B}(x)
\le M(x).
\]
Hence $B$ is $\Delta^{\max}$-feasible.

Conversely, every $\Delta^{\max}$-feasible block admits a state satisfying
$\varepsilon_B$, and therefore is viable.
\end{proof}

A $\Delta$-feasible block might still not be executable in a given state. 
To execute simultaneously, the transformations of the block must not only be mutually compatible, but must also be individually enabled and have access to sufficient resources under the adopted resource policy in the current state. 
This leads to the following notion of enabling.

\begin{definition}[$\Delta$-enabled block]
Let $M$ be a state and let $\Delta$ be a resource policy.
A block $B$ is \emph{$\Delta$-enabled} in $M$, under the $\Delta$ policy, if:

\begin{itemize}
\item $B$ is $\Delta$-feasible; 
\item $M\models\varepsilon_B$;
\item $\Delta(B)(x)\le{M}(x)$, $\forall x\in S$.
\end{itemize}
A 
%
%
block $B$ is $\Delta$-\emph{maximal} in $M$ if
it is $\Delta$-enabled in $M$ and 
%
%
for no $\Delta$-enabled block $B^\prime$ we have
$B\sqsubset{B}^\prime$.
\end{definition}

\begin{proposition}
Let $B$ be a block, $\Delta$ a resource policy, and $M$ a state. If $B$ is $\Delta$-enabled in $M$, then $B\in \mathcal M(\mathrm{EN}(A,M))$.
\end{proposition}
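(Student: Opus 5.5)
The plan is to unfold the definitions and show that the support of $B$ consists only of transformations enabled in $M$. We regard a block over the subset $\EN(A,M)\subseteq A$ as a multiset over $A$ whose support lies in $\EN(A,M)$. So the claim $B\in\mathcal M(\EN(A,M))$ amounts to showing that $\supp(B)\subseteq\EN(A,M)$.

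First, I will use the second condition in the definition of a $\Delta$-enabled block, namely $M\models\varepsilon_B$. By Remark~\ref{R1}, $\mathbf M(\varepsilon_B)=\bigcap_{\tau\in\supp(B)}\mathbf M(\varepsilon_\tau)$. Hence $M\models\varepsilon_\tau$ for every $\tau\in\supp(B)$. To keep the argument self-contained, I can check this directly. For each $x\in S$ and each $\tau\in\supp(B)$,
\[
\ell_{\varepsilon_\tau}(x)\le\max_{\sigma\in\supp(B)}\ell_{\varepsilon_\sigma}(x)=\ell_{\varepsilon_B}(x)\le M(x)<u_{\varepsilon_B}(x)=\min_{\sigma\in\supp(B)}u_{\varepsilon_\sigma}(x)\le u_{\varepsilon_\tau}(x).
\]

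Next, by Definition~\ref{def:enabling}, each such $\tau$ is enabled in $M$, that is, $\tau\in\EN(A,M)$. Therefore $B(\tau)>0$ implies $\tau\in\EN(A,M)$. This means $B$ vanishes outside $\EN(A,M)$, so $B$ is, by restriction, a multiset over $\EN(A,M)$.

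The resource condition $\Delta(B)\sqsubseteq M$ and $\Delta$-feasibility are not needed. The argument is therefore independent of the resource policy $\Delta$. The only delicate point is the edge case of empty support, where $\varepsilon_B$ is an empty max/min. Here the claim holds trivially, since the empty multiset $\emptyset$ belongs to $\mathcal M(\EN(A,M))$ regardless of how $\varepsilon_B$ is interpreted. I do not expect any genuine obstacle beyond stating explicitly that $\mathcal M(\EN(A,M))$ is identified with the multisets over $A$ supported in $\EN(A,M)$.
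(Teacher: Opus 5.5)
Your proof is correct and matches the paper's argument: from $M\models\varepsilon_B$ and Remark~\ref{R1} you get $M\models\varepsilon_\tau$ for every $\tau\in\supp(B)$, hence $\supp(B)\subseteq\EN(A,M)$ and so $B\in\mathcal M(\EN(A,M))$. Your direct check of the interval inequalities and your handling of the empty-support case are sound extra details that the paper leaves implicit.
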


\begin{proof}
Since $B$ is $\Delta$-enabled in $M$, we have $M\models\varepsilon_B$.
By Remark \ref{R1}, this means that $M\models\varepsilon_\tau$ for every $\tau\in\supp(B)$. Hence every transformation in $\supp(B)$ is enabled in $M$, that is, $\supp(B)\subseteq \mathrm{EN}(A,M)$.
Therefore, $B$ is a multiset over $\mathrm{EN}(A,M)$, and so $B\in \mathcal M(\mathrm{EN}(A,M))$.
\end{proof}

Corollary \ref{cor:additive-implies-maximum-enabled} is a simple consequence of how additive and maximum demand were defined. 

\begin{corollary}\label{cor:additive-implies-maximum-enabled}
If block $B$ is $\Delta^+$-enabled in state $M$, then it is $\Delta^{\max}$-enabled in $M$.
\end{corollary}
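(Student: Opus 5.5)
The plan is to verify the three clauses in the definition of $\Delta^{\max}$-enabledness one at a time. Each clause will come either directly from the hypothesis that $B$ is $\Delta^+$-enabled in $M$, or from Proposition~\ref{prop:demand-comparison}, which gives $\Delta^{\max}(B)(x)\le\Delta^{+}(B)(x)$ pointwise.

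\textbf{Second clause.} The condition $M\models\varepsilon_B$ does not involve the resource policy at all. It is part of the hypothesis, so it carries over unchanged.

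\textbf{Third clause.} For every $x\in S$, Proposition~\ref{prop:demand-comparison} and the third clause of $\Delta^+$-enabledness give
\[
\Delta^{\max}(B)(x)\le\Delta^{+}(B)(x)\le M(x).
\]

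\textbf{First clause.} We need $B$ to be $\Delta^{\max}$-feasible. There are two ways to obtain it, and the argument can use either.
\begin{itemize}
\item Direct witness: the state $M$ itself satisfies $M\models\varepsilon_B$ and $\Delta^{\max}(B)\sqsubseteq M$, which is exactly Definition~\ref{def:feasible} for $\Delta^{\max}$.
\item Via the earlier proposition: $\Delta^+$-feasibility presupposes that $B$ is viable. By the proposition stating that viability coincides with $\Delta^{\max}$-feasibility, $B$ is then $\Delta^{\max}$-feasible.
\end{itemize}

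The only point needing care is the degenerate case where $\supp(B)=\emptyset$. There $\Delta^{\max}(B)$ is $0$ by the stated convention, and $\Delta^+(B)$ is the empty sum, also $0$. So the inequality from Proposition~\ref{prop:demand-comparison} still holds and the argument goes through unchanged.

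Otherwise the statement is essentially immediate. The main (mild) obstacle is only to make sure that the feasibility clause is justified, rather than silently assumed from the other two.
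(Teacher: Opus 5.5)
Your proof is correct and follows the paper's argument: $M\models\varepsilon_B$ carries over from the hypothesis, and Proposition~\ref{prop:demand-comparison} gives $\Delta^{\max}(B)(x)\le\Delta^{+}(B)(x)\le M(x)$ for all $x\in S$. The paper leaves the $\Delta^{\max}$-feasibility clause implicit, while you check it explicitly, and either of your two justifications for it is valid.
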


\begin{proof}
Assume that $B$ is $\Delta^+$-enabled in $M$. Then, by definition, $M\models \varepsilon_B$ and
$\Delta^+(B)(x)\le M(x)$ for every $x\in S$. By Proposition \ref{prop:demand-comparison}, $\Delta^{\max}(B)(x)\le \Delta^+(B)(x)$
for every $x\in S$. Hence $\Delta^{\max}(B)(x)\le M(x)$
for every $x\in S$. Therefore $B$ is $\Delta^{\max}$-enabled in $M$.
\end{proof}

\begin{example}\label{ex:add_vs_max}
    Let $\tau_1,\tau_2$ be two interval transformations over $S=\{a\}$ such that $\varepsilon_1(a)=(2,5),\varepsilon_2(a)=(3,5)$.
    Then, with $M\in\mathcal{M}(S)$ given by $M(a)=4$, the block $B_1$,  given by $B_1(\tau_1)=B_1(\tau_2)=1$, is enabled in $M$ under maximum demand, but not under additive demand.
    Conversely, $B_2$ given by $B_2(\tau_1)=2,B_2(\tau_2)=0$ is enabled in $M$ under both policies.
\end{example}

Note that even a block $B$ consisting of a single occurrence of an individual transition $\tau$ can be a $\Delta^+$-maximal block in $M$, if we have $2\ell_{\varepsilon_\tau}(x)>M(x)$ for some $x\in{S}$, and either 
$u_{\varepsilon_\tau}(x)<\ell_{\varepsilon_{B^\prime}}(x)$ or $\ell_{\varepsilon_\tau}(x)\ge{u}_{\varepsilon_{B^\prime}}(x)$ for some $x\in{S}$, for any other block $B^\prime$. 
On the other hand, no $\Delta^{max}$-enabled block in $M$ can be considered $\Delta^{max}$-maximal in $M$, for any $M\in\mathcal{M}(S)$.

\subsection{Production policies}\label{sub:production_policies}

Once a block has been selected, the production semantics proceeds in two stages.

First, each occurrence of a transformation in the block independently selects a multiset from the family represented by its production structure. 
Since blocks are multisets, multiple copies of the same transformation may participate in the execution of a block.
These copies are treated independently and may therefore select different production multisets. 

Second, the selected outcomes are combined according to a chosen production
policy to obtain the aggregated production of the block.

%
An \emph{outcome selection} for block $B$ assigns a multiset $P_{\tau}(i)\in\mathbf{M}(\pi_\tau)$
to every $\tau\in\operatorname{supp}(B)$ and every $1\le{i}\le{B}(\tau)$.
%
The selected outcomes are combined according to a production policy.
%
A \emph{production policy} associates with every outcome selection $(P_\tau(i))_{\tau\in \supp(B),1\le i\le B(\tau)}$ a multiset $\Pi(B)$, called the \emph{aggregated production} of $B$.

Two natural production policies are the following:
\begin{itemize}
\item \emph{additive production:}
\[
\Pi^{+}(B)(x)
=
\bigsqcup_{\tau\in \supp(B)}\bigsqcup_{i=1}^{B(\tau)} P_{\tau}(i)(x), \forall x\in S;
\]
\item \emph{maximum production:}
\[
\Pi^{\max}(B)(x)
=
\max\{P_{\tau}(i)(x) \mid 
\tau\in \supp(B), 1\le i\le B(\tau)\}, \forall x\in S.
\]
\end{itemize}
By convention, the maximum of an empty block is set to 0. 

When all multisets are Boolean, maximum production coincides with set union.

\subsection{Update policies}\label{sub:update_policies}
The aggregated production obtained from a block is used to construct the next state.
%
An \emph{update policy} specifies how a state $M$ is transformed into a successor state $M^\prime$ using the aggregated demand $\Delta(B)$ and aggregated production $\Pi(B)$ of a globally enabled block $B$.

Two natural update policies are the following:

\begin{itemize}
\item \emph{destructive update policy}: replaces the current state by the aggregated production  $M^\prime=\Pi(B)$;
\item \emph{conservative update policy}: removes the resources consumed by the block and adds the aggregated production $M^\prime=(M-\Delta(B))+\Pi(B)$.
\end{itemize}

The semantics of the framework is determined by three independent design choices: a resource policy, a production policy, and an update policy. Different combinations of these choices give rise to different variants of reaction systems and resource-sensitive computational models.

\section{Execution strategies}\label{S_execution}
The semantics introduced in the previous section determines the effect of executing a block. A computation model must additionally specify which blocks may be executed in a given state. Different choices lead to different execution strategies and different forms of concurrency.

Let $\mathcal A=(S,A)$ be an interval transformation system, $\Delta$ a resource policy, $M$ a state, and $B$ be a $\Delta$-enabled block in state $M$.
An execution step of block $B$ in state $M$ consists of:
\begin{itemize}
\item selecting an outcome $P_\tau\in\mathcal M(\pi_\tau)$ for each transformation $\tau\in \supp(B)$, selecting $B(\tau)$ outcomes from $\mathcal{M}(\pi_\tau)$;
\item computing the aggregated production according to the selected production policy;
\item constructing the successor state according to the selected update policy.
\end{itemize}

Different execution strategies may impose additional restrictions on the choice of which blocks to apply out of the enabled ones:
\begin{itemize}
\item \emph{sequential strategy}: restrict execution to singleton blocks;
\item \emph{concurrent strategy}: execution of 
any one of the enabled blocks;
%
%
\item \emph{maximally concurrent strategy}: execution of maximal enabled blocks.
\end{itemize}

Sequential, concurrent, and maximally concurrent strategies provide different interpretations of the same underlying transformation system. 
It is to be noted that any block can be seen as the multiset union of singleton blocks, so that the sequential strategy can be seen as a particular case of the concurrent strategy, as, on the other end, can the maximally concurrent one.
The choice of strategy affects the structure of the reachable state space and the degree of non-determinism exhibited by computations.

Classical reaction systems correspond to a maximally concurrent strategy in which the unique maximal enabled block is $\EN(\mathcal A,M)$ itself, so that every enabled transformation participates in the computation step.


The framework therefore distinguishes three independent sources of variation:

\begin{itemize}
\item \emph{semantic choices}, represented by resource, production, and update policies;
\item \emph{execution choices}, represented by different strategies for selecting enabled blocks;
\item \emph{outcome choices}, arising from the non-deterministic production structures associated with transformations.
\end{itemize}

This separation makes it possible to study concurrency, resource sensitivity, and non-deterministic production independently within a common formal framework.

\begin{definition}[Execution step]\label{def_execution}
Let $\mathcal{F}=(\mathcal{A},\Delta,\Pi, U,E)$ be a framework, where $\mathcal A$ is an interval transformation system, $\Delta$ is a resource policy, $\Pi$ is a production policy, $\mathbf{U}$ is an update policy, and $\mathbf{E}$ is an execution strategy. 
Let $M$ be a state.

If $B$ is a $\Delta$-enabled block selected according to 
%
%
$\mathbf{E}$, 
and $M^\prime$ is the state obtained by applying the corresponding 
%
%
policies 
$\Pi$, $\mathbf{U}$
to $B$, then we write $M\xrightarrow{B}_{\mathcal{F}}M^\prime$ and say that $M^\prime$ is obtained from $M$ by an execution step of $B$ 
according to $\mathcal{F}$.

Furthermore, we write $M\Longrightarrow{M}^\prime_{\mathcal{F}}$ if there exists a block $B$ such that $M\xrightarrow{B}_{\mathcal{F}}M^\prime$.
\end{definition}

In the following, we will drop the subscript ${}_{\mathcal{F}}$ and simply write $M\xrightarrow{B}M^\prime$ when ${\mathcal{F}}$ is understood in the context.

\section{A unified view of reaction system variants}\label{S_variants}
The framework introduced in the previous sections provides a common semantic foundation for a variety of reaction-system variants. 
We now show how several representative models can be recovered within the framework, thereby providing a unified view of their underlying semantics. 
Throughout this section, we say that a reaction-system variant is \emph{recovered} by the framework if, up to the encoding of states and reactions described below, every computation step of the original model coincides with the computation step induced by the corresponding choice of resource, production, update, and execution policies.
In other words, the operational semantics of the original model coincides with that induced
by the chosen policies.

\begin{definition}[Recovery]\label{def:recovery}
Let $\mathcal R$ be a reaction-system variant with state space $\mathcal{S_R}$.
Let $\mathcal{F}=(\mathcal{A},\Delta,\Pi, U,E)$ be a framework, where $\mathcal A=(S,A)$ is an interval transformation system, $\Delta$ is a resource policy, $\Pi$ is a production policy, $\mathbf{U}$ is an update policy, and $\mathbf{E}$ is an execution strategy. 
Assume that states of $\mathcal{R}$ are encoded by a function
\[
\iota_S:\mathcal {S_R}\to\mathcal{M}(S),
\]
and that the defining reactions of $\mathcal R$ are encoded as transformations of $\mathcal{A}$.

We say that $\mathcal{R}$ is \emph{recovered} by the framework $\mathcal{F}$ if, for all states $M,M^\prime\in\mathcal{S_R}$,
\[
M\to_{\mathcal{R}}M^\prime
\quad\Longleftrightarrow\quad
\iota_S(M)\Longrightarrow_{\mathcal F}\iota_S(M^\prime).
\]
Here $\to_{\mathcal R}$ denotes the one-step evolution relation of the original variant, and $\Longrightarrow_{\mathcal{F}}$ is the transition relation induced by Definition \ref{def_execution} for framework $\mathcal{F}$.
\end{definition}

\subsection{Classical reaction systems}\label{sub:classical}
Classical reaction systems from \cite{ehrenfeucht_reaction_2007} are expressed in our framework by restricting admissible blocks to Boolean multisets and using the maximally concurrent strategy.

\begin{construction}\label{cons:classicalEncoding}
  Let $\tau=(R_\tau,I_\tau,P_\tau)$ be a reaction over a background set $S$.
We associate with $\tau$ the interval transformation
\[
\tau^\prime=(\varepsilon_\tau,\pi_\tau),
\]
where
\[
\varepsilon_{\tau^\prime}(x)=
\begin{cases}
(1,\top), & x\in R_\tau,\\
(0,1), & x\in I_\tau,\\
(0,\top), & \text{otherwise},
\end{cases}
\]
and
\[
\pi_{\tau^\prime}(x)=
\begin{cases}
(1,2), & x\in P_\tau,\\
(0,1), & x\notin P_\tau.
\end{cases}
\]  
\end{construction}

The condition $M\models\varepsilon_\tau$
is equivalent to the standard enabling condition requiring all reactants to be present and all inhibitors to be absent.
Since products are Boolean multisets, maximum production coincides with set union. 
Furthermore, maximum demand reflects the fact that resources are not consumed and may simultaneously support multiple reactions.

\begin{proposition}\label{prop:classicalRS}
Classical reaction systems are recovered by a framework $\mathcal{F}$ combining $\Delta^{\max},\Pi^{\max}$, 
%
%
and destructive update under the maximally concurrent execution strategy.
\end{proposition}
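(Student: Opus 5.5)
We verify Definition \ref{def:recovery} directly, with the encoding $\iota_S$ mapping a state $T\subseteq S$ of the classical reaction system to its characteristic multiset, and each reaction $\tau$ encoded as $\tau'$ via Construction \ref{cons:classicalEncoding}. Recall the classical step: $T\to_{\mathcal R}T'$ iff $T'=\bigcup_{\tau\in\EN(T)}P_\tau$, where $\tau$ is enabled iff $R_\tau\subseteq T$ and $I_\tau\cap T=\emptyset$. The plan has three steps: (i) show enabling coincides; (ii) identify the maximal $\Delta^{\max}$-enabled blocks; (iii) compute the successor under $\Pi^{\max}$ and destructive update.

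For (i), for a Boolean multiset $\iota_S(T)$ we check pointwise that $\iota_S(T)\models\varepsilon_{\tau'}$ iff $T(x)\ge1$ for $x\in R_\tau$, $T(x)=0$ for $x\in I_\tau$, and no constraint otherwise; this is exactly classical enabling (we assume, as is standard, $R_\tau\cap I_\tau=\emptyset$, so the case definition is unambiguous). Hence $\tau'\in\EN(A,\iota_S(T))$ iff $\tau$ is enabled in $T$. For (ii), by the proposition that $\Delta$-enabled blocks lie in $\mathcal M(\EN(A,M))$, every $\Delta^{\max}$-enabled block is supported in the enabled transformations; conversely any block $B$ with $\supp(B)\subseteq\EN(A,M)$ is $\Delta^{\max}$-enabled in $M$, since $M\models\varepsilon_B$ by Remark \ref{R1} and $\Delta^{\max}(B)=\ell_{\varepsilon_B}\le M$, with feasibility following from viability as shown for $\Delta^{\max}$.

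This is where the main obstacle lies: because blocks are multisets and $\Delta^{\max}$ ignores multiplicities, there is no $\sqsubseteq$-maximal enabled block (the paper itself remarks this), so the maximally concurrent strategy must be read, as the construction's lead-in says, with admissible blocks restricted to Boolean multisets. Under that restriction the enabled blocks are exactly the subsets of $\EN(A,M)$, and the unique maximal one is $B=\EN(A,M)$ itself. If one prefers not to restrict, I would instead argue that any block whose support is all of $\EN(A,M)$ yields the same successor, because $\Pi^{\max}$ is idempotent under duplication of Boolean outcomes; so the induced relation $\Longrightarrow_{\mathcal F}$ is unaffected.

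For (iii), each $\pi_{\tau'}$ is a singleton family $\{\iota_S(P_\tau)\}$ (by the Sets example), so outcome selection is deterministic. $\Pi^{\max}$ over Boolean multisets is set union, giving $\Pi^{\max}(B)=\iota_S(\bigcup_{\tau\in\EN(T)}P_\tau)$, and destructive update sets $M'=\Pi^{\max}(B)$. Thus $\iota_S(T)\Longrightarrow_{\mathcal F}M'$ holds iff $M'=\iota_S(T')$ with $T\to_{\mathcal R}T'$. Since $\iota_S$ is injective and every reachable $M'$ is Boolean, this gives the biconditional required by Definition \ref{def:recovery}. The empty-enabled-set case is covered by the convention that the maximum over an empty block is $0$, yielding $\emptyset$ as in the classical semantics.
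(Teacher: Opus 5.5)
Your proposal is correct and follows essentially the same route as the paper's proof: enabling of $\tau'$ coincides with classical enabling, the executed block is $\EN(A,W)$, and maximum production over the singleton Boolean outcomes is set union, which destructive update installs as the successor. You are also more explicit than the paper in showing that every subset of $\EN(A,M)$ is $\Delta^{\max}$-enabled, and in noting that a unique maximal block exists only under the Boolean-block restriction that the paper uses implicitly. Your unrestricted fallback is the one weak point. Under the paper's definition of $\Delta^{\max}$-maximality no block is maximal once $\EN(A,M)\neq\emptyset$. That variant therefore needs maximality reinterpreted as maximality of support; the idempotence of $\Pi^{\max}$ alone does not give it.
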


\begin{proof}
Let $\mathcal{R}=(S,A)$ be a classical reaction system and let $W\subseteq S$ be a state. 
For each reaction $\tau=(R_\tau,I_\tau,P_\tau)\in{A}$, let $\tau^\prime=(\varepsilon_{\tau^\prime},\pi_{\tau^\prime})$ be its encoding according to Construction \ref{cons:classicalEncoding}.

For every state $W$, we have:
\[ 
W\models\varepsilon_{\tau^\prime} \quad\Longleftrightarrow\quad{R}_\tau\subseteq{W}\text{ and }I_\tau\cap W=\emptyset.
\]
Hence, the enabled transformations in the encoded system are exactly the
encodings of the classically enabled reactions.

Under the maximally concurrent strategy, the executed block $B$ has as support precisely the set of enabled transformations $\EN(A,W)$. 
Since products are Boolean, maximum production coincides with set union, and therefore
\[
\Pi^{\max}(B)=\iota_S\left(\bigcup_{\tau\in\EN(A,W)} P_\tau\right).
\]
By destructive update, the successor state is
\[
\iota_S(W')=\Pi^{\max}(B) = \iota_S\left(\bigcup_{\tau\in\EN(A,W)}P_\tau\right),
\]
which is exactly the classical reaction-system result. 
Thus the execution steps according to $\mathcal{F}$ coincide under the given encoding; hence, the classical model is recovered.
\end{proof}

\subsection{Restricted reaction systems}\label{S_restricted}
Restricted reaction systems \cite{aman_mutual_2020, aman_reaction_2025} introduce nondeterminism by relaxing
the maximally concurrent execution strategy of classical reaction systems.
Instead of requiring all enabled reactions to participate in a computation step, the executed set of reactions may be constrained by a mutual-exclusion relation $\#$ between reactions.

Within our framework, restricted reaction systems are represented by adding the feasibility policy $\Delta^\#$. 



\begin{definition}[$\Delta^\#$-feasible block]
Let $\mathcal A=(S,A)$ be an interval transformation system, and let
$\#\subseteq A\times A$ be a symmetric incompatibility relation.
A viable block $B\in\mathcal B(A)$ is $\Delta^\#$-feasible if
$(\tau_1,\tau_2)\notin \#$
for all distinct transformations
$\tau_1,\tau_2\in\operatorname{supp}(B)$.
%
\end{definition}

Thus, $\Delta^\#$-feasibility is a structural feasibility condition: it depends only on the support of the block and not on the current state or on resource availability.
Restricted reaction systems are obtained by requiring executable blocks to be both $\Delta^{\max}$-feasible and $\Delta^\#$-feasible, while keeping maximum production and destructive update under a concurrent execution strategy.

\subsection{Multiset reaction systems}\label{sub:multiRS}
Multiset reaction systems as defined in \cite{bottoni_multiset_2025} generalize reaction systems by allowing arbitrary multiplicities.
We consider here the concurrent operational semantics adopted in \cite{BLP26}. 
We briefly recall the concurrent destructive semantics of multiset reaction
systems needed for the present comparison.  

A multiset reaction is a triple $\tau=(R_\tau,I_\tau,P_\tau)$, where reactants, inhibitors, and products are multisets and $R_\tau(x)<I_\tau(x)$ for all $x\in \supp(I_\tau)$. 
A reaction is enabled in a multiset state $M$ whenever
\[
R_\tau(x)\le M(x),\ \  \forall x\in \supp(R_\tau), \qquad\text{and}\qquad
M(x)<I_\tau(x),\ \  \forall x\in \supp(I_\tau).
\]
A block $B$ is a multiset of enabled reactions such that its total reactant multiset $R_B$ satisfies $R_B\sqsubseteq M$. Executing $B$ produces the total product multiset $P_B$, which becomes the successor state under the destructive update semantics considered here. We refer to~\cite{bottoni_multiset_2025, BLP26} for a more detailed presentation of the model and its properties.

For a multiset reaction $\tau=(R_\tau,I_\tau,P_\tau)$, the multiplicities appearing in the reactant and product multisets are represented directly through the lower bounds of the corresponding interval structures. 
The enabling structure specifies the required multiplicities, while the production structure specifies the multiplicities contributed by the reaction.
The semantics of multiset reaction systems is characterized by resource competition and product accumulation. Consequently, additive demand and additive production are used. Since the current state is replaced by the generated products after each step, destructive update is adopted.


\begin{proposition}\label{prop:semMRS}
The operational semantics of multiset reaction systems described above is recovered by 
%
%
a framework $\mathcal{F}$ combining $\Delta^+,\Pi^+$, and destructive update
under the concurrent execution strategy.
\end{proposition}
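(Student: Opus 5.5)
We follow the pattern of the proof of Proposition \ref{prop:classicalRS}: fix an explicit encoding of multiset reactions as interval transformations, verify that enabling coincides, then verify that block enabling and block effects coincide under the chosen policies. For a multiset reaction $\tau=(R_\tau,I_\tau,P_\tau)$ we take $\tau'=(\varepsilon_{\tau'},\pi_{\tau'})$ with
\[
\varepsilon_{\tau'}(x)=\begin{cases}(R_\tau(x),I_\tau(x)), & x\in\supp(I_\tau),\\ (R_\tau(x),\top), & \text{otherwise},\end{cases}
\qquad
\pi_{\tau'}(x)=(P_\tau(x),P_\tau(x)+1).
\]
The encoding of states is $\iota_S=\mathrm{id}$, since states are already multisets over $S$. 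The interval condition $\ell<u$ holds because $R_\tau(x)<I_\tau(x)$ on $\supp(I_\tau)$. By the Multisets example, $\mathbf M(\pi_{\tau'})=\{P_\tau\}$, so every outcome selection is forced.

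The first step is to show, directly from the definitions, that $M\models\varepsilon_{\tau'}$ if and only if $\tau$ is enabled in $M$. This also gives $D_{\tau'}=R_\tau$.

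The second step handles blocks. Since the blocks of the original model are exactly multisets over $A$, we identify a block $B$ with its image under $\tau\mapsto\tau'$. By Remark \ref{R1}, $M\models\varepsilon_B$ holds if and only if every reaction in $\supp(B)$ is enabled in $M$. Moreover, $\Delta^+(B)=\bigsqcup_\tau B(\tau)R_\tau=R_B$, so the condition $\Delta^+(B)\sqsubseteq M$ is exactly $R_B\sqsubseteq M$. The remaining requirement of $\Delta^+$-enabledness, namely $\Delta^+$-feasibility, is witnessed by $M$ itself. Hence $B$ is $\Delta^+$-enabled in $M$ if and only if it is an admissible block of the multiset reaction system in $M$. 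The concurrent strategy allows any such block to be chosen, so the sets of blocks selectable in the two models agree.

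The third step computes the effect of a block. Since outcome selections are forced, $\Pi^+(B)=\bigsqcup_\tau\bigsqcup_{i=1}^{B(\tau)}P_\tau=P_B$, and destructive update gives $M'=P_B$. Combining the three steps yields $M\to_{\mathcal R}M'$ if and only if $M\Longrightarrow_{\mathcal F}M'$, as required by Definition \ref{def:recovery}.

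The main obstacle is the treatment of the empty block and of reactions whose enabling structure imposes no constraint. Two conventions must be checked: first, whether the original concurrent semantics admits the empty block; second, whether the convention $\varepsilon_B$ for empty support, namely $(0,\top)$ everywhere, makes the empty block enabled with successor $\emptyset$ in both models. I would first try to show that both models allow it. Failing that, I would state the correspondence for nonempty blocks and treat the empty block as a special case.
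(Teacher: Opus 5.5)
Your proposal is correct and follows the same route as the paper's proof. Both arguments check that individual enabling is preserved, identify $\Delta^+(B)$ with $R_B$ and $\Pi^+(B)$ with $P_B$, and conclude via destructive update and the concurrent strategy.

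You are more explicit than the paper in several places. The paper leaves the encoding implicit; you give it, and your choice $\pi_{\tau'}(x)=(P_\tau(x),P_\tau(x)+1)$ is what forces the outcome selection. The paper also does not mention that $\Delta^+$-feasibility is witnessed by $M$ itself. Your concern about the empty block resolves in your favour: it is admissible in both models under the description given, since $R_\emptyset=\emptyset\sqsubseteq M$, and it yields successor $\emptyset$ in both.
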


\begin{proof}
Let $M$ be a multiset state and let
$\tau=(R_\tau,I_\tau,P_\tau)$ be a multiset reaction.

By construction of the encoding,
\[
M\models\varepsilon_\tau
\iff
R_\tau(x)\le M(x)\ \forall x\in\supp(R_\tau),
\quad\text{and}\quad
M(x)<I_\tau(x)\ \forall x\in\supp(I_\tau).
\]
so the enabling condition of individual reactions is preserved.

Let $B$ be a concurrently selected block. By definition of additive demand,
\[
\Delta^+(B)
=
\bigsqcup_{\tau\in\operatorname{supp}(B)}
B(\tau)\otimes R_\tau
=
R_B.
\]
Hence,
\[
\Delta^+(B)\sqsubseteq M
\iff
R_B\sqsubseteq M,
\]
so the enabling condition of blocks is preserved.

Similarly, additive production gives
\[
\Pi^+(B)
=
\bigsqcup_{\tau\in\operatorname{supp}(B)}
B(\tau)\otimes P_\tau
=
P_B.
\]

Finally, destructive update sets the successor state to
\[
M'=\Pi^+(B)=P_B,
\]
which is exactly the successor state in the operational semantics described
above. Since both models employ the concurrent execution strategy, the
execution steps coincide under the encoding. Therefore, the operational
semantics of multiset reaction systems is recovered.
\end{proof}

Under this semantics, simultaneously executed transformations compete for resources and contribute cumulatively to the successor state.

Several execution strategies have been considered in \cite{bottoni_multiset_2025}, such as sequential, concurrent or constrained concurrent, respectively indicating that only one reaction is enabled at a time, any arbitrary feasible block is chosen for execution, or that limits on the number of instances of reactions are used to select the block to be executed. 

\subsection{Reaction systems with concentration}\label{sub:concentration}
A different take on the extension of reaction systems to the usage of multisets is in \cite{MKP16}, where \textit{reaction systems with discrete concentration} are defined on multisets (\textit{bags} in the paper), taking $\mathbf{t}(x)$ as an indication of the \textit{concentration}  of the entity $x\in{S}$ in a state $\mathbf{t}$.
Rather than the union of multisets, they consider the \textit{merge} of a set of bags $\mathbf{G}$ (noted $\merge\mathbf{G}$) by taking, for each entity $x\in{S}$, the maximum of the multiplicities with which $x$ appears in each bag in $\mathbf{G}$ (and $0$ if it does not appear in any bag).

Then, a $c$-reaction $a=(\mathbf{r}_a,\mathbf{i}_a,\mathbf{p}_a)$ is composed of three bags, with $\mathbf{r}(x)<\mathbf{i}(x)$ for $x\in\operatorname{supp}(\mathbf{i}_a)$, and $a$ is enabled in $\mathbf{t}$ if $\mathbf{r}_a(x)\le\mathbf{t}(x)$ for $x\in\operatorname{supp}(\mathbf{r}_a)$ and $\mathbf{t}(x)<\mathbf{i}_a(x)$ for $x\in{supp}(\mathbf{i}_a)$.
The new state after applying the $c$-reactions in $A$ is then given by $\merge\{\mathbf{p}_a\mid{a}\in\EN(A,\mathbf{t})\}$.

\begin{proposition}\label{prop:concentration}
    The operational semantics of reaction systems with discrete concentrations is recovered by 
    a framework $\mathcal{F}$ combining $\Delta^{\max},\Pi^{\max}$,
    %
    %
    and destructive update under the maximal concurrent execution strategy.
\end{proposition}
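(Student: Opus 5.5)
We follow the template of the proof of Proposition \ref{prop:classicalRS}, replacing sets by bags. First we fix the encoding: a $c$-reaction $a=(\mathbf{r}_a,\mathbf{i}_a,\mathbf{p}_a)$ becomes the interval transformation $a'=(\varepsilon_{a'},\pi_{a'})$. Its enabling structure is
\[
\varepsilon_{a'}(x)=(\mathbf{r}_a(x),\mathbf{i}_a(x)) \text{ for } x\in\supp(\mathbf{i}_a),\qquad \varepsilon_{a'}(x)=(\mathbf{r}_a(x),\top) \text{ otherwise}.
\]
This is a legitimate interval structure because $\mathbf{r}_a(x)<\mathbf{i}_a(x)$ on $\supp(\mathbf{i}_a)$ and $u=\top$ elsewhere. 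The production structure is the singleton structure $W_{\mathbf{p}_a}$ of the multiset example, so $\mathbf{M}(\pi_{a'})=\{\mathbf{p}_a\}$. Bag states are encoded by taking $\iota_S$ to be the identity.

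\textbf{Step 1: individual enabling.} For every state $\mathbf{t}$ we check that $\mathbf{t}\models\varepsilon_{a'}$ if and only if $a$ is enabled in $\mathbf{t}$. When $\mathbf{r}_a(x)=0$ the lower bound is vacuous, so the two conditions agree entity by entity. Consequently, $\EN$ of the encoded system is exactly the set of encodings of the enabled $c$-reactions.

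\textbf{Step 2: identifying the executed block.} We show that the block executed under the maximally concurrent strategy has support exactly $\EN(A,\mathbf{t})$, and we take the Boolean block $B$ with $\supp(B)=\EN(A,\mathbf{t})$. Its $\Delta^{\max}$-enabledness follows in three parts:
\begin{itemize}
\item all its transformations are enabled in $\mathbf{t}$, so $\mathbf{t}\models\varepsilon_B$ by Remark \ref{R1};
\item hence $B$ is viable, and so $\Delta^{\max}$-feasible by the proposition equating viability with $\Delta^{\max}$-feasibility;
\item $\Delta^{\max}(B)=\ell_{\varepsilon_B}\sqsubseteq\mathbf{t}$.
\end{itemize}
No block containing a non-enabled transformation is $\Delta^{\max}$-enabled, by the proposition stating $B\in\mathcal M(\EN(A,M))$. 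So every enabled block has support contained in $\EN(A,\mathbf{t})$.

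\textbf{Step 3: the successor state.} Every outcome selection is forced to $P_{a'}(i)=\mathbf{p}_a$. Hence
\[
\Pi^{\max}(B)(x)=\max\{\mathbf{p}_a(x)\mid a\in\EN(A,\mathbf{t})\}=(\merge\{\mathbf{p}_a\mid a\in\EN(A,\mathbf{t})\})(x),
\]
with the empty-maximum convention matching the value $0$ in the merge. Destructive update then makes this the successor state, which is exactly the bag produced by the $c$-reaction step.

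\textbf{Main obstacle: maximality under $\Delta^{\max}$.} The remark after Example \ref{ex:add_vs_max} notes that under $\Delta^{\max}$ no enabled block is maximal with respect to $\sqsubset$. Repeating copies never violates $\Delta^{\max}$, so ``maximal enabled blocks'' is literally empty. We handle this as the paper does for classical reaction systems. Maximality is read on supports (equivalently, blocks are restricted to Boolean ones), so the unique maximal block is the one supported on $\EN(A,\mathbf{t})$. The key observation is that multiplicities are irrelevant here: because $\Pi^{\max}$ is idempotent and $\Delta^{\max}$ depends only on the support, any block $B'$ with $\supp(B')=\EN(A,\mathbf{t})$ yields the same $\Delta^{\max}(B')$ and the same $\Pi^{\max}(B')$. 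The transition relation is therefore the same under either reading.

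This gives both directions of the equivalence in Definition \ref{def:recovery}, since the successor in each model is uniquely determined by $\mathbf{t}$.
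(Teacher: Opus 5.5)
Your proof is correct and takes the same route as the paper's short argument: the encoding makes $\Delta^{\max}$-enabling coincide with $c$-reaction enabling, $\Pi^{\max}$ on the forced singleton outcomes computes the merge, and destructive update installs the merge as the successor state. Where you differ is in detail: you write out the encoding, which the paper leaves implicit, and you explicitly handle the degeneracy of literal $\sqsubset$-maximality under $\Delta^{\max}$, using the Boolean/support reading the paper tacitly adopts for classical reaction systems together with the observation that multiplicities affect neither $\Delta^{\max}$ nor $\Pi^{\max}$.
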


\begin{proof}
    The proof is analogous to that for Proposition \ref{prop:semMRS}, by replacing the arguments for additive demand and production with those for maximum demand and production, as in the proof of Proposition \ref{prop:classicalRS}.
    Maximum demand reproduces the enabling semantics of reaction systems with concentrations, while maximum production coincides with the merge operator. Destructive update replaces the current state by the merged products, exactly as in the original definition.
\end{proof}

\subsection{Resource-preserving multiset reaction systems}\label{sub:preserving}
A natural extension of multiset reaction systems is obtained by combining explicit resource competition with state persistence, as proposed in \cite{BLP26}.
As in multiset reaction systems, resources and products are aggregated additively. However, the update policy preserves resources that are not consumed during the execution of a block.

\begin{proposition}
The resource-preserving multiset reaction systems of \cite{BLP26} are recovered by a framework $\mathcal{F}$  combining $\Delta^+$, $\Pi^+$, and conservative update under the concurrent execution strategy.
\end{proposition}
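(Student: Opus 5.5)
The plan mirrors the proof of Proposition \ref{prop:semMRS}, changing only the final update step. First I fix the encoding. Each multiset reaction $\tau=(R_\tau,I_\tau,P_\tau)$ is sent to the transformation with
\[
\varepsilon_\tau(x)=(R_\tau(x),\,I_\tau(x)) \text{ for } x\in\supp(I_\tau), \qquad \varepsilon_\tau(x)=(R_\tau(x),\top) \text{ otherwise},
\]
and with $\pi_\tau(x)=(P_\tau(x),P_\tau(x)+1)$. By the Multisets example, $\mathbf{M}(\pi_\tau)=\{P_\tau\}$, so outcome selection is deterministic. The encoding is legitimate because of the side condition $R_\tau(x)<I_\tau(x)$. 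The state encoding $\iota_S$ is the identity on multisets. As in the proof of Proposition \ref{prop:semMRS}, I then check that $M\models\varepsilon_\tau$ coincides with the enabling condition of the original model.

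Second, I treat blocks. By the definition of $\Delta^+$, the aggregated demand is $\Delta^+(B)=\bigsqcup_\tau B(\tau)R_\tau=R_B$. Hence $B$ is $\Delta^+$-enabled in $M$ exactly when every reaction in $\supp(B)$ is enabled and $R_B\sqsubseteq M$. Two facts are used here: Remark \ref{R1} for $M\models\varepsilon_B$, and the observation that $M$ itself witnesses $\Delta^+$-feasibility. This is precisely the block-enabling condition of resource-preserving multiset reaction systems in \cite{BLP26}, which use the same competitive block condition as multiset reaction systems. Under the concurrent strategy, every such block may be selected. Since outcomes are unique, $\Pi^+(B)=P_B$.

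Third, I apply conservative update, which gives $M'=(M-\Delta^+(B))+\Pi^+(B)=(M\multisetminus R_B)\sqcup P_B$. The subtraction is well defined and has no truncation, because $R_B\sqsubseteq M$. The result is exactly the successor state in \cite{BLP26}, where unconsumed resources persist and products are added. Both directions of the equivalence in Definition \ref{def:recovery} then follow. Any original step $M\to M'$ via block $B$ yields $M\xrightarrow{B}M'$, and conversely any framework step comes from a $\Delta^+$-enabled block, which is a legal block of the original model.

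The main obstacle I expect is matching the definitions precisely rather than doing any computation. I need to confirm that \cite{BLP26} consumes exactly the reactant multiset $R_B$, and not some threshold-only portion. Since the paper does not restate that model, I will recall its step rule explicitly. If inhibitors or threshold-only reactants should not be consumed, then the demand must contain only the consumed part, and the encoding would need to be adjusted so that $D_\tau=R_\tau$. With the direct lower-bound encoding, this adjustment is not needed.
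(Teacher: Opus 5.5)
Your proposal is correct and takes essentially the same route as the paper. Like the paper, you reuse the multiset-RS encoding and enabling argument, compute $\Delta^+(B)=R_B$ and $\Pi^+(B)=P_B$, and observe that conservative update gives $(M\multisetminus R_B)\sqcup P_B$, which is exactly the resource-preserving step. You also make explicit several points the paper leaves implicit: exact production intervals so that outcome selection is deterministic, $M$ itself witnessing $\Delta^+$-feasibility, and the subtraction being untruncated because $R_B\sqsubseteq M$.
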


\begin{proof}
The encoding of reactions and the enabling condition are the same as for multiset reaction systems from Section \ref{sub:multiRS}. 
Thus additive demand computes
\[
\Delta^+(B)(x)
=
\sum_{\tau\in\operatorname{supp}(B)} B(\tau)R_\tau(x),
\]
and additive production computes
\[
\Pi^+(B)(x)
=
\sum_{\tau\in\operatorname{supp}(B)} B(\tau)P_\tau(x).
\]
The only difference is the update policy. Under conservative update, the successor state is
\[
M'=(M-\Delta^+(B))+\Pi^+(B).
\]
Thus, the reactants consumed by the block are removed, the products are added, and all non-consumed resources persist. This is exactly the resource-preserving transition of $[14]$. Since the execution strategy is concurrent in both models, the execution steps coincide under the encoding.
\end{proof}

This semantics simultaneously captures quantitative resource consumption, accumulation of products, and persistence of resources across transitions.

The examples above demonstrate that several established reaction-system variants can be represented within a common semantic framework, as summarised in Table \ref{tab:instantiations}. 
By separating resource management, product aggregation, and state updates into independent components, the framework exposes the assumptions underlying existing models and provides a systematic basis for the development of new reaction-system-inspired computational formalisms.

\begin{table}[htb]
\caption{Reaction-system variants recovered within our framework.}
\label{tab:instantiations}
\centering
\begin{tabular}{lllll}
\hline
Model & Resource policy & Production policy & Update policy & Execution\\
\hline
Classical RS & $\Delta^{\max}$ & Maximum & Destructive & Maximal\\
Restricted RS & $\Delta^{\max}$, $\Delta^\#$ & Maximum & Destructive & Concurrent\\
Multiset RS & $\Delta^+$ & Additive & Destructive & Concurrent\\
RS with concentration & $\Delta^{max}$ & Maximum & Destructive & Maximal \\
Resource-preserving mRS & $\Delta^+$ & Additive & Conservative & Concurrent\\
\hline
\end{tabular}
\end{table}

The aggregate enabling structure associated with a block also establishes a connection with the notions of combined reactions, enabling equivalence, and cover relations studied in \cite{genova_enabling_2025}. Since $\varepsilon_B$ characterizes precisely the states that enable all transformations in a block, it provides a natural representation of the collective enabling behaviour of the block. A detailed investigation of the relationship between block-based semantics and the equivalence notions introduced in \cite{genova_enabling_2025} is left for future work.

\subsection{Quantitative reaction systems}\label{sub:quantitative}
Quantitative reaction systems (qRS) were introduced in \cite{mitrana_quantitative_2025}. Unlike the variants discussed above, quantitative reaction systems require a preprocessing stage that transforms the current state before enabling is evaluated. Consequently, they cannot be obtained solely through a choice of resource, production, and update policies. Instead, they require an additional semantic component that can be incorporated naturally into the framework.  

A quantitative reaction is a tuple $\alpha=(R,I,X,P)$, where $R,I,P$ denote the reactants, inhibitors, and products, respectively, while $X\subseteq R\times I$ specifies which inhibitors act on which reactants. 
%
States are here represented by multisets rather than sets.
The semantics of qRS is based on a preprocessing phase applied before enabling is evaluated. Given a multiset of reactions
%
%
$\Delta=\alpha^{i_1}_1\dots\alpha^{i_k}_k\in\mathcal{M}(\mathbf{R}(S))$
and a state $M\in\mathcal{M}(S)$, the qRS semantics repeatedly removes matching reactant--inhibitor pairs according to the relations $X$ associated with the reactions in $\Delta$. This process yields a transformed state $M_\Delta$. The block is then enabled if the aggregate reactant requirements remain available in $M_\Delta$ while the aggregate inhibitors have been eliminated. The successor state is obtained by adding the products of the reactions in $\Delta$ and retaining the resources that remain after the reactants have been consumed.

\begin{example}
Consider the quantitative reaction (defined as in \cite{mitrana_quantitative_2025}) $\alpha=(R,I,X,P)$, where
\[
R=\{a,c\}, \qquad
I=\{b,d\}, \qquad
X=\{(a,b)\}.
\]

Let the current state be the multiset $M = \{3a, 2b, c\}$.
The preprocessing phase repeatedly removes matching copies of $a$ and $b$ since $X=\{(a,b)\}$. Since two copies of $b$ are present, two copies of $a$ are cancelled as well, while the only copy of $c$ is not consumed, yielding the transformed state $M_\alpha=\{a,c\}$.
The inhibitor $b$ has been completely eliminated (and the inhibitor $d$ was not present from the start), while one copy for each of the reactants $a,c$ remains available. Consequently, the enabling condition is satisfied. 

In contrast, if $M' = \{2a,3b,c\}$, then the preprocessing phase yields $M'_\alpha=b$, and the reaction is not enabled because no copy of $a$ remains available after the cancellation process.
\end{example}

This behaviour of qRS can be incorporated into the present framework by introducing a preprocessing policy $\rho(B,M)$,
which associates with every block $B$ and state $M$ a transformed state. For quantitative reaction systems, $\rho(B,M)$ corresponds precisely to the state $M_\Delta$ produced by the cancellation/inhibition procedure of \cite{mitrana_quantitative_2025}. 
The enabling conditions, resource requirements, and update policies are then evaluated on $\rho(B,M)$ rather than on the original state $M$.
Hence, quantitative reaction systems can be viewed as an extension of the proposed framework obtained by augmenting the semantic pipeline with a preprocessing stage:
\[
M
\;\longrightarrow\;
\rho(B,M)
\;\longrightarrow\;
\text{enabling}
\;\longrightarrow\;
\text{production}
\;\longrightarrow\;
\text{update}.
\]
This points to the possibility of integrating additional semantic mechanisms without modifying the core notions of interval structures, transformations, blocks, and execution strategies.

\subsection{Simulation techniques}\label{sub:2-step}
Literature on reaction systems has often considered simulations between variants, either describing how the semantics of variant can be reduced to  that of the classical version, or, vice versa, how particular restrictions on the form of reactions still maintain the expressive power of standard reaction systems. 

In most cases, this is achieved by defining the simulating system over some extension $S^\prime$ of the original background set $S$ for the simulated system, and then comparing the obtained states at steps which satisfy some condition, typically that of being defined over $S$ only. 

Usually, this results in so called 2-step simulations, where states are  alternately formed with entities from either of $S^\prime,S$, the states over $S$ being produced at even steps. 
Such a technique is used for simulation through minimal reactions (see, e.g., \cite{salomaa_minimal_2013,manzoni_reaction_2013,teh_simulation_2020}) and for realising transactional behaviours where the result of applying a set of reactions at odd steps (producing states over $S^\prime$) can result in commitment or rollback to states in $S$ (see \cite{bottoni_transactions_2021}.

In our approach, this would result in the addition of a \textit{filtering policy} after the completion of a process, to retain as effective steps only those producing states satisfying the constraint.
Note that in the considered cases, processes which are interrupted at an odd step would not provide a consistent simulation, and would therefore be discarded by the filtering policy.

\subsection{Beyond reaction systems: Petri nets and related models}\label{sub:multiPetri}
Petri nets (see e.g., \cite{Mur89}) provide a well-established formalism for modelling concurrent
systems based on multisets of tokens. Their operational semantics can also be
expressed within 
%
our
framework by interpreting places as entities,
markings as multisets, and transitions as 
%
%
transformations.

A Petri net consists of a finite set of places $P$, a finite set of transitions $T$, weighted pre- and post-incidence relations, and capacities $\gamma:P\rightarrow\mathbb N$.

A marking is a multiset $M\in\mathcal M(P)$
such that $M(p)\le\gamma(p)$, for all $p\in P$.
A transition is enabled if every input place contains the number of tokens 
required by the weight of the pre-incidence relation
and every output place has sufficient remaining capacity
to accommodate the number of tokens dictated by the weight of the post-incidence relation. 
Firing the transition removes the required input tokens and adds the produced output tokens.

Each marking is represented directly as a multiset over $P$. Each transition $\tau$ is encoded as an interval-based transformation $\tau'=(\varepsilon_{\tau'},\pi_{\tau'})$, where the lower bounds of $\varepsilon_{\tau'}$ represent the input arc weights, the production structure $\pi_{\tau'}$ represents the output arc weights, and the upper bounds encode the place capacities.
The capacity constraints are enforced through an additional feasibility policy $\Delta^\gamma$, while the conservative update reproduces the consumption and production of tokens.

\begin{proposition}
The operational semantics of Petri net firing is recovered by combining maximum demand, the capacity feasibility policy $\Delta^\gamma$, additive production, conservative update, and the sequential execution strategy.
\end{proposition}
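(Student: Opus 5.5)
We would follow the pattern of the proofs of Propositions \ref{prop:classicalRS} and \ref{prop:semMRS}: fix the encoding precisely, then check that enabling, production, and update each coincide, and finally use the sequential strategy to reduce blocks to single transitions. Concretely, for a transition $t$ with pre-weights $W^-(p,t)$ and post-weights $W^+(t,p)$, we set $\varepsilon_{t'}(p)=(W^-(p,t),\gamma(p)+1)$. Since $W^-(p,t)\le\gamma(p)$ can be assumed (otherwise $t$ is dead and may be dropped), this is a legal interval structure. We set $\pi_{t'}(p)=(W^+(t,p),W^+(t,p)+1)$, so that $\mathbf{M}(\pi_{t'})$ is the singleton $\{W^+(\cdot,t)\}$ and no outcome choice arises. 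The identity $\iota_S(M)=M$ encodes markings. The feasibility policy $\Delta^\gamma$ is taken to be the requirement that, for the current state $M$, $M(p)-\Delta(B)(p)+\Pi(B)(p)\le\gamma(p)$ for all $p$. Without it, the lower and upper bounds of $\varepsilon_{t'}$ only express ``enough tokens'' and ``marking legal'', and the output-capacity condition is not captured.

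Next we verify single-step agreement. Under the sequential strategy only singleton blocks $B=\{t'\}$ with $B(t')=1$ are considered, so $\supp(B)=\{t'\}$ and $\varepsilon_B=\varepsilon_{t'}$. Moreover $\Delta^{\max}(B)=D_{t'}=W^-(\cdot,t)$, which equals $\Delta^+(B)$ here; the choice of maximum demand is therefore immaterial. Likewise $\Pi^+(B)=W^+(\cdot,t)$. By the enabling definition, $B$ is $\Delta^{\max}$-enabled in $M$ iff $W^-(p,t)\le M(p)\le\gamma(p)$ for all $p$. For a legal marking the upper bound is automatic, so this says every input place holds enough tokens. Adding $\Delta^\gamma$ yields $M(p)-W^-(p,t)+W^+(t,p)\le\gamma(p)$, which is exactly the capacity-respecting enabling rule for Petri nets. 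Conservative update then gives $M'=(M-W^-(\cdot,t))+W^+(\cdot,t)$, the firing rule.

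Finally, both directions of Definition \ref{def:recovery} follow. If $M\xrightarrow{t}M'$ in the net, the singleton block $\{t'\}$ is enabled and produces $M'$. Conversely, every framework step uses some singleton block $\{t'\}$, which by the equivalences above is a firing of $t$. The subtraction in the update is well defined because enabling guarantees $\Delta(B)\sqsubseteq M$.

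The main obstacle is the capacity condition. The text's informal phrase ``every output place has sufficient remaining capacity'' could be read either as the strict variant $M(p)+W^+(t,p)\le\gamma(p)$ or as the weak variant $M(p)-W^-(p,t)+W^+(t,p)\le\gamma(p)$. The upper bounds of $\varepsilon_{t'}$ alone cannot express either, since they do not depend on $W^+$. I would therefore fix $\Delta^\gamma$ to match whichever variant is intended, defaulting to the weak one above. I would also note that $\Delta^\gamma$, unlike $\Delta^\#$, depends on the current state $M$ and so lives at the enabling level rather than at the purely structural feasibility level.
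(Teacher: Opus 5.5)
Your proof is correct and follows the same route as the paper's: input arc weights become the lower bounds of $\varepsilon_{t'}$, $\Delta^\gamma$ enforces output capacity, maximum demand is harmless because singleton blocks make $\Delta^{\max}=\Delta^+$, and additive production plus conservative update reproduce the firing rule; you simply make the paper's sketch explicit by fixing the encoding and giving a state-dependent definition of $\Delta^\gamma$. One small side remark is inaccurate, though it does not affect the argument: upper bounds of $\varepsilon_{t'}$ can depend on $W^+$ because they are chosen per transformation (e.g.\ $u_{\varepsilon_{t'}}(p)=\gamma(p)-W^+(t,p)+1$ encodes the strict capacity variant in the sequential setting), so they fail to express the capacity condition only under your particular choice $u_{\varepsilon_{t'}}(p)=\gamma(p)+1$.
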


\begin{proof}
The enabling structure of each encoded transformation requires every input place to contain at least the corresponding arc weight, while the capacity feasibility policy $\Delta^\gamma$ guarantees that firing the transition cannot exceed the capacity of any output place.
Maximum demand correctly evaluates enabling since each transition is considered independently under the sequential execution strategy. Additive production reproduces the addition of tokens to the output places, and conservative update removes the consumed input tokens while preserving all remaining tokens.
Therefore, every Petri net firing step coincides with one execution step of the framework under the encoding, and the other way around.
\end{proof}

The construction extends naturally to concurrent firing. A concurrently fired set of transitions is represented by a block of interval-based transformations. Input-token requirements are accumulated through additive demand, while the capacity feasibility policy $\Delta^\gamma$ is evaluated for the entire block. Products are aggregated additively and the marking is updated conservatively. Consequently, concurrent Petri net semantics is recovered by combining $\Delta^+$, $\Delta^\gamma$, additive production, conservative update, and the concurrent execution strategy.

Condition-event Petri nets correspond to the Boolean restriction in which all arc weights and capacities are equal to one. Additional feasibility constraints expressing contact and conflict are captured by a specialised feasibility policy $\Delta^{CE}$. The resulting semantics is recovered by combining $\Delta^{\max}$, $\Delta^{CE}$, maximum production, destructive update, and the concurrent execution strategy.

Finite-state automata (see e.g., \cite{hopcroft2006introduction}) arise as the special case of condition-event Petri nets in which every transition has exactly one input and one output place and each marking contains a single token. Their dynamics is therefore recovered by combining $\Delta^{\max}$, maximum production, destructive update, and the sequential execution strategy.

Similarly, unbounded multiset rewriting, corresponding to single-membrane P systems \cite{paun_bridging_2013} without membrane rewriting, is recovered by taking unbounded upper limits, additive demand, additive production, conservative update, and the maximally concurrent execution strategy.

\section{Conclusions}\label{sec:concls}
We have introduced an interval-based framework for representing enabling conditions and production outcomes in reaction-system-like models. 
Interval structures provide a common language for exact multiplicities, threshold constraints, saturation bounds, and non-deterministic production choices.

The main contribution of the framework is the decomposition of the operational semantics into independent resource, production, update, and execution policies, together with nondeterministic outcome selection.
This separation makes the assumptions underlying different variants of reaction systems explicit. 
In particular, it distinguishes resource competition, captured by additive demand, from resource sharing, captured by maximum demand.

We have shown how several existing models can be viewed as particular instantiations of our framework, including classical reaction systems, multiset reaction systems, and resource-preserving multiset reaction systems. 
%
%
More generally, the framework provides a systematic way of describing and comparing reaction-system variants through independent semantic choices. 
The framework can also be extended with additional semantic stages, as illustrated by the treatment of quantitative reaction systems through a preprocessing policy,
or of 2-step simulations through a post-mortem filtering of process steps.

By its separation of the different aspects of enabling and applying transformations, and by accommodating multisets in a natural way, the framework may support the definition of transformation laws mixing the different policies, beyond those considered in literature so far.

Moreover, the definition of the production policy is consistent with a ``multi-agent" view of transformations, in which each transformation is seen as an independent agent able to contribute a multiset out of the family of multisets defined by its production structure.

Several lines of research may therefore be identified, including the systematic study of reachability, equivalence, and verification problems within this unified setting, as well as the investigation of further resource and update policies motivated by applications in concurrency, natural computing, and biological modelling.

\bibliographystyle{unsrt}
\bibliography{RS}

\end{document}